\documentclass[conference,letterpaper]{IEEEtran2026}
\usepackage[left=0.74in,right=0.64in,top=0.69in,bottom=0.99in]{geometry}

\IEEEoverridecommandlockouts
\newcommand{\Latexfilepath}{.}
\usepackage{\Latexfilepath/Bermudez} 

\begin{document}

\title{Equivalence Between Nested Gibbs Measures and \\ Log-Linear Combinations of Gibbs Measures}

 \author{
   \IEEEauthorblockN{
   Yaiza Bermudez\IEEEauthorrefmark{1}, 
   Samir M. Perlaza\IEEEauthorrefmark{1}\IEEEauthorrefmark{2}\IEEEauthorrefmark{4}, and
   I\~{n}aki Esnaola\IEEEauthorrefmark{3}\IEEEauthorrefmark{4}\\
   Emails: name.lastname@inria.fr and esnaola@sheffield.ac.uk
                     }
   \IEEEauthorblockA{\IEEEauthorrefmark{1}%
                     Centre Inria d'Université Côte d'Azur, INRIA,
                     Sophia Antipolis, France.}
   \IEEEauthorblockA{\IEEEauthorrefmark{2}%
                     Laboratoire GAATI, Universit\'{e} de la Polyn\'{e}sie fran\c{c}aise, Fa`a`\={a}, French Polynesia.}
   \IEEEauthorblockA{\IEEEauthorrefmark{3}%
                     School of Electrical and Electronic Engineering, University of Sheffield, Sheffield,
United Kingdom.}   
   \IEEEauthorblockA{\IEEEauthorrefmark{4}%
		     Department of Electrical and Computer Engineering, Princeton University, Princeton, NJ 08544, USA.
\thanks{This work is supported in part by the European Commission through the H2020-MSCA-RISE-2019 project 872172; the French National Agency for Research (ANR)  through the Project ANR-21-CE25-0013 and the project ANR-22-PEFT-0010 of the France 2030 program PEPR Réseaux du Futur; and in part by the Agence de l'innovation de défense (AID) through the project UK-FR 2024352. }                     } 
  }
 \maketitle

\begin{abstract} 
In this paper, three operations on Gibbs probability measures are studied. 
The first operation, which takes as argument one Gibbs probability measure and is often referred to as renormalization, consists of generating a new Gibbs measure by normalizing a power of the density of the given measure. Such a normalization has a twofold effect: first, it changes the regularization factor; and second, it concentrates the support within a subset of the original support. Interestingly, it is shown that these effects can be independently controlled by different parameters. 
The second operation consists of a normalized log-linear combination of the densities of Gibbs probability measures.
The third operation, which takes as argument two Gibbs probability measures, consists of changing the reference measure of the latter with the former. Hence, the former is said to be ``nested'' within the latter, yielding a new Gibbs probability measure. 
The resulting probability measures, from both the second and third operations, which are also Gibbs probability measures, are shown, respectively, to be the solutions to optimization problems of the expectations of linear combinations of the objective functions of the given measures, subject to a relative entropy regularization. Such optimization problems differ exclusively in the coefficients of the linear combinations. This observation leads to the conclusion that there exists a set of parameters for which nesting one Gibbs probability measure into another has the same effect as log-linearly combining them. 
These operations are shown to have relevant applications in statistical learning. As an example, a one-shot federated learning system in which clients send their locally trained Gibbs algorithms to the server to be log-linearly combined by the server is shown to achieve the same performance as a Gibbs algorithm trained upon the aggregation of all local training datasets.
\end{abstract}
\vspace{-0.05in}
\enlargethispage{-0.05in}
\section{Introduction}
\vspace{-0.08in}
Gibbs probability measures have gained significant attention in statistical learning theory and information theory, in part due to their numerous properties  \cite{xu2017information, russo2016controlling, medina2022robustness, Bu2024Towards, bu2023generalization, perlaza2023validation, catoni2007pac, ray2023asymptotics, perlaza2024empirical, rodriguezgalvez2024information} and the fact that they represent the long-run behavior of stochastic-gradient-based learning algorithms with time-invariant learning rates~\cite{Azizian2024What}.
%
Gibbs measures form a benchmark to which algorithms can be compared for assessing their generalization error
\cite{aminian2021exact, aminian2024information, perlaza2024generalization, zou2024Generalization} and have also been shown to represent the worst-case data-generating probability distributions for a fixed model in supervised learning~\cite{zouJSAIT2024}.
%

The main contributions of this work concern three operations on Gibbs probability measures.
The first operation consists of constructing a Gibbs probability measure by normalizing (to one) a power of the density of a given Gibbs probability measure. This operation is classical in statistical physics and information theory, and is not limited to Gibbs measures. See for instance \cite{bercher2012simple, wilson1983renormalization, asadi2025hierarchical, chhabra1989direct, beck1993thermodynamics, abe2003geometry, ohara2010dually, bercher2009source, tsallis2009introduction, abe2005necessity} and references therein. While some authors refer to such operation as ``renormalization'', others refer to the resulting probability distribution as an ``escort distribution''.
In this work, renormalization is shown to lead to two independent effects on the original Gibbs measure: first, trimming the regularization factor; and second, concentrating the support on a subset of its original support (Theorem~\ref{TheoremApril16at13h13in2026Antibes}). Interestingly, both effects are shown to be independently controlled by different parameters of this operation. This new result complements existing evidence on the benefits of trimming the regularization factor in~\cite{Bu2024Towards, aminian2021exact, perlaza2023validation}, as well as on the benefits of strategically choosing the reference measure as shown in \cite{bermudez2026machine, perlaza2024empirical,bermudez2026decentralized}.
The second operation is the normalized log-linear combination of powers of densities of Gibbs probability measures. This generalization of renormalization has been studied beyond Gibbs probability measures in \cite{asadi2025hierarchical, lalitha2018social, bercher2012simple} and references therein. 
In this work, it is shown that the normalized log-linear combination of two powers of densities of Gibbs measures is itself a Gibbs measure, and that it is the solution to a minimization of the expectation of a linear combination of the objective functions of the original Gibbs measures subject to a regularization by a relative entropy (Theorem~\ref{TheoremApril16at15h19in2026Antibes}).
\enlargethispage{-0.10in}

Given two Gibbs probability measures, the third operation consists of constructing a new Gibbs probability measure by using the first measure as the reference measure of the second. The first measure is then said to be nested within the second, yielding a \emph{nested Gibbs probability measure}. Nested Gibbs measures form a special class of Gibbs measures with important applications in statistical learning theory, including decentralized learning and machine unlearning; see, for instance,~\cite{bermudez2026machine} and~\cite{bermudez2026decentralized}.
In this work, nested Gibbs measures are shown to be solutions to a minimization of the expectation of a linear combination of the two original objective functions subject to a regularization by a relative entropy with respect to a given reference measure (Theorem~\ref{TheoremApril16at15h22in2026Antibes}).
\enlargethispage{-0.05in}
Hence, nesting Gibbs probability measures or log-linearly combining them leads to new Gibbs probability measures that are solutions to optimization problems whose difference is only on the coefficients of the linear combination of such objective functions. Using this observation, the central result is a formal proof that there exists a choice of parameters such that both operations yield the same Gibbs probability measure (Corollary~\ref{CorollaryApril16at15h20in2026Antibes}).
Finally, this equivalence between such operations is used to design a one-shot federated learning system \cite{mcmahan2017communication} that achieves, at all clients, the same performance as a centralized system that trains a Gibbs algorithm upon the aggregation of the training datasets of all clients (Corollary~\ref{CorApril23at18h55in2026Antibes}). This result is reminiscent of the decentralized learning construction in~\cite{bermudez2026decentralized}, which also achieves the same performance as a centralized system. 
While the construction in~\cite{bermudez2026decentralized} is based on sequentially nesting the Gibbs algorithms of all clients, the construction in this paper is based on simultaneously log-linearly combining the clients' Gibbs algorithms at a central server.
\vspace{-1.5ex}
\section{Preliminaries}
This section describes the notation used in this work and formally introduces conditional Gibbs probability measures.
\enlargethispage{-0.10in}
\subsection{Notation}
Given a set $\set{X} \subseteq \reals^d$, for some $d \in \ints$, the Borel sigma-field defined on $\set{X}$ is denoted by $\mathscr{B}\autoparent{\set{X}}$. 
The set of probability measures on the measurable space $\left(\set{X},\mathscr{B}(\set{X})\right)$
is denoted by $\triangle(\set{X})$. 
Given a product measurable space $\left(\set{X}\times\set{Y},\mathscr{B}(\set{X}\times\set{Y},)\right)$, the set of  probability measures on $\set{Y}$ conditioned on $\set{X}$ is denoted by $\simplex{\set{
Y} | \set{X}}$.
Given a $\sigma$-finite measure $Q$ on  $\left(\set{X},\mathscr{B}(\set{X})\right)$, the set of probability measures in $\simplex{\set{X}}$ that are absolutely continuous with respect to $Q$ is denoted by $\simplexabs{Q}{\set{X}}$.  
Given two measures $P$ and $Q$ on the same measurable space, the notation $\abscontPQ$ stands for ``the measure $P$ is absolutely continuous with respect to $Q$''. The Radon-Nikodym derivative of $P$ with respect to $Q$ is denoted by $\RND{P}{Q}$. The relative entropy of $P$ with respect to $Q$ is denoted by $\KL{P}{Q}$.
 
\subsection{Gibbs Conditional Probability Measures}\label{SecGibbsConditionalProbMeasures}

Consider two sets $\set{X} \subset \reals^{d_1}$ and $\set{Y} \subset \reals^{d_2}$, for some given integers $d_1$ and $d_2$. Let $\left(\set{X},\mathscr{B}\autoparent{\set{X}}\right)$ and $\left(\set{Y},\mathscr{B}\autoparent{\set{Y}}\right)$  be two Borel measurable spaces. Consider also the product measurable space $(\set{X} \times \set{Y}, \mathscr{B}(\set{X} \times  \set{Y}))$. 
Using this notation, Gibbs conditional probability measures in $\simplex{\set{Y} | \set{X}}$ are parametrized by a Borel measurable function
$h: \set{X} \times \set{Y} \to \mathbb{R}$; a $\sigma$-finite measure $Q$ on the measurable space~$(\set{Y}, \mathscr{B}(\set{Y}))$; and a real $\lambda \in \reals\setminus\lbrace 0 \rbrace$.
The function $h$ is referred to as the objective function; the parameter $\lambda$ is referred to as the regularization factor; and the measure $Q$ is often referred to as the reference measure. This nomenclature will become more evident as the connections of Gibbs measures to specific optimization problems are unveiled.
The definition of Gibbs conditional probability measures is presented hereunder.
\begin{definition}[Gibbs Conditional Probability Measures]
\label{DefGeneralGibbs}
Given a Borel measurable function~$h : \set{X}\times\set{Y}\to\reals$; a~$\sigma$-finite measure~$Q$ on the measurable space~$(\set{Y}, \mathscr{B}(\set{Y}))$; and
a real~$\lambda \in \reals\setminus\lbrace 0 \rbrace$, the probability measure~$P^{(h, Q, \lambda)}_{Y \mid X}\in \triangle\left(\set{Y} \mid \set{X} \right)$ is said to be an~$(h, Q, \lambda)$-Gibbs conditional probability measure if
\vspace{-3ex}
\begin{IEEEeqnarray}{rCl}
\label{EqGeneralGibbsIntegrability}
\mbox{$\forall x \in \set{X}$, } 0 < \int \exp\left( -\frac{1}{\lambda}h\autoparent{x,y}\right)\mathrm{d}Q\left( y \right) < + \infty;
\end{IEEEeqnarray}
and for all~$\left(x, y \right) \in \set{X} \times \supp Q$,
\vspace{-1ex}
\begin{IEEEeqnarray}{rcl}
\label{EqGeneralGibbsRN}
\frac{\mathrm{d}P_{Y\mid X = x}^{(h, Q, \lambda)}}{\mathrm{d}Q } \autoparent{y} & = &\frac{\exp\autoparent{- \frac{1}{\lambda} h\autoparent{x,y}}}{ \int \exp\left( -\frac{1}{\lambda}h\autoparent{x,y}\right)\mathrm{d}Q\left( y \right)}.
\end{IEEEeqnarray}
\end{definition}
In Definition~\ref{DefGeneralGibbs}, while $P_{Y|X}^{(h,Q,\lambda)}$ is an $(h, Q, \lambda)$-Gibbs conditional probability measure,  the measure $P_{Y|X=x}^{(h,Q,\lambda)}$, obtained by conditioning it upon a given vector $x\in\set{X}$, is referred to as an $(h, Q, \lambda)$-Gibbs probability measure.

\section{Main Results}

\subsection{Renormalization of Gibbs Probability Measures}
\label{SecRenormNested} 

The normalization of a power of a probability measure, which is often referred to as \emph{renormalization}, is defined as follows.
\begin{definition}[Renormalization]
\label{DefRenormalization}
Consider $\alpha \in \reals\setminus\lbrace 0 \rbrace$, two $\sigma$-finite measures $Q_1$ and $Q_2$, and a probability measure $P$, all on the same measurable space with $P \ll Q_1$ and $Q_2 \ll Q_1$. The $\autoparent{\alpha,Q_1, Q_2}$-renormalization of $P$ is a probability measure,  denoted by $R$, such that for all $y \in \supp Q_2$,
\begin{IEEEeqnarray}{rCl}
\label{EqRenormalization}
\RND{R}{Q_2}\autoparent{y}
&=&
\frac{\autoparent{\RND{P}{Q_1}\autoparent{y}}^{\alpha}}
{\displaystyle\int \autoparent{\RND{P}{Q_1}\autoparent{\tilde{y}}}^{\alpha}\mathrm{d}Q_2\autoparent{\tilde{y}}},
\end{IEEEeqnarray}
provided that 
$0
<
\int \autoparent{\RND{P}{Q_1}\autoparent{y}}^{\alpha}\mathrm{d}Q_2\autoparent{y}
< + \infty.
$
\end{definition}
\enlargethispage{-0.10in}
The renormalization of a Gibbs probability measure is also a Gibbs probability measure, as shown hereunder.
More specifically, given the $(h,Q,\lambda)$-Gibbs probability measure $P^{(h,Q,\lambda)}_{Y\mid X=x}$ in~\eqref{EqGeneralGibbsRN}, denote by $R_1$, its $(\alpha, Q, Q_1)$-renormalization, which satisfies, for all $y \in \supp Q_1$,
\begin{IEEEeqnarray}{rCl}
\label{EqApril15at17h43in2026Sophia}
\RND{R_1}{Q_1}\autoparent{y} = \frac{\autoparent{\RND{P^{(h,Q,\lambda)}_{Y\mid X=x}}{Q} \autoparent{y}}^{\alpha}}{\displaystyle\int \autoparent{\RND{P^{(h,Q,\lambda)}_{Y\mid X=x}}{Q}\autoparent{\hat{y}}}^{\alpha} \mathrm{d}Q_1\autoparent{\hat{y}}}.
\end{IEEEeqnarray}
The following theorem shows that $R_1$ is a Gibbs probability measure.
\begin{theorem}
\label{TheoremApril16at13h13in2026Antibes}
The $(\alpha, Q, Q_1)$-renormalization of the Gibbs probability measure $P^{(h,Q,\lambda)}_{Y\mid X=x}$ in \eqref{EqGeneralGibbsRN}, namely the measure $R_{1}$ in \eqref{EqApril15at17h43in2026Sophia}, satisfies for all $y \in \supp Q_1$, 
\begin{IEEEeqnarray}{rCl}
\label{EqApril16at13h03in2026Antibes}
\RND{R_1}{Q_1}\autoparent{y} &=& \frac{\exp\autoparent{-~\frac{\alpha}{\lambda} h(x,y)}}{\int \exp\autoparent{-~\frac{\alpha}{\lambda} h(x,\hat{y})} \mathrm{d}Q_1(\hat{y})} =  \RND{P^{(h,Q_1,\frac{\lambda}{\alpha})}_{Y\mid X=x}}{Q_1}\autoparent{y}. \quad
\end{IEEEeqnarray}
\end{theorem}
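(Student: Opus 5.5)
The plan is a direct computation: substitute the explicit density~\eqref{EqGeneralGibbsRN} into the definition~\eqref{EqApril15at17h43in2026Sophia} of $R_1$, simplify, and then recognize the result as a Gibbs density with modified parameters. Write
\[
Z \triangleq \int \exp\left(-\frac{1}{\lambda}h(x,y)\right)\mathrm{d}Q(y),
\]
which satisfies $0<Z<+\infty$ by~\eqref{EqGeneralGibbsIntegrability}. Then for all $y\in\supp Q$ the Radon--Nikodym derivative of $P^{(h,Q,\lambda)}_{Y\mid X=x}$ with respect to $Q$ equals $\exp(-\frac{1}{\lambda}h(x,y))/Z$.

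The first step is to raise this density to the power $\alpha$, which gives $Z^{-\alpha}\exp(-\frac{\alpha}{\lambda}h(x,y))$. The same expression appears, evaluated at $\hat y$, inside the integral against $Q_1$ in the denominator of~\eqref{EqApril15at17h43in2026Sophia}. Since $Z^{-\alpha}$ is a strictly positive finite constant, independent of $y$, it can be pulled out of the integral. It then cancels between numerator and denominator, which yields the first equality in~\eqref{EqApril16at13h03in2026Antibes}.

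The second step is to match this expression with Definition~\ref{DefGeneralGibbs} under the parameters $(h,Q_1,\lambda/\alpha)$. Note that $\lambda/\alpha\in\reals\setminus\{0\}$ because $\lambda\neq 0$ and $\alpha\neq 0$, and that $1/(\lambda/\alpha)=\alpha/\lambda$. Hence the right-hand side coincides with~\eqref{EqGeneralGibbsRN} for the parameters $(h,Q_1,\lambda/\alpha)$.

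The only delicate point, and the main obstacle I expect, is well-definedness. We need the integrability condition~\eqref{EqGeneralGibbsIntegrability} for $(h,Q_1,\lambda/\alpha)$, namely
\[
0<\int\exp\left(-\frac{\alpha}{\lambda}h(x,\hat y)\right)\mathrm{d}Q_1(\hat y)<+\infty.
\]
This condition is exactly the proviso of Definition~\ref{DefRenormalization}, since the integral there equals $Z^{-\alpha}$ times this one. Existence of $R_1$ therefore means precisely that $P^{(h,Q_1,\lambda/\alpha)}_{Y\mid X=x}$ exists at this $x$.

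A second issue is that the density of $P$ with respect to $Q$ is specified by~\eqref{EqGeneralGibbsRN} only on $\supp Q$ and $Q$-almost everywhere. Evaluating it on $\supp Q_1$ is legitimate because $Q_1\ll Q$, as Definition~\ref{DefRenormalization} requires. Consequently the versions agree $Q_1$-almost everywhere, so the integral against $Q_1$ is unaffected. I would choose the version given by the exponential formula throughout, so that the identity holds pointwise for every $y\in\supp Q_1$.
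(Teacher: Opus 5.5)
Your proposal is correct and follows essentially the same direct computation as the paper. You substitute the Gibbs density into the renormalization formula, pull the constant $Z^{-\alpha}\in(0,+\infty)$ out of the $Q_1$-integral and cancel it, and then recognize the $(h,Q_1,\lambda/\alpha)$-Gibbs density, with the renormalization proviso coinciding with the integrability condition for the new parameters. Your remark about versions of the density can be shortened: $Q_1 \ll Q$ already gives $\supp Q_1 \subseteq \supp Q$, so the pointwise formula of Definition~\ref{DefGeneralGibbs} holds at every $y \in \supp Q_1$ without choosing a version.
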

\begin{IEEEproof}
The proof is presented in \cite[Appendix~C]{InriaRR9613}.
\end{IEEEproof}

Theorem~\ref{TheoremApril16at13h13in2026Antibes} shows that the $(\alpha, Q, Q_1)$-renormalization of the $(h,Q,\lambda)$-Gibbs probability measure $P^{(h,Q,\lambda)}_{Y\mid X=x}$ is identical to the~$(h,Q_1,\frac{\lambda}{\alpha})$-Gibbs probability measure $P^{(h,Q_1,\frac{\lambda}{\alpha})}_{Y\mid X=x}$ in~\eqref{EqApril16at13h03in2026Antibes}. More specifically,
%
$\KL{R_1}{P^{(h,Q_1,\frac{\lambda}{\alpha})}_{Y\mid X=x}}  =  0$.
%
Hence, such a renormalization has a twofold effect on the Gibbs measure $P^{(h,Q,\lambda)}_{Y\mid X=x}$ in~\eqref{EqGeneralGibbsRN}: (i) the regularization factor is changed from $\lambda$ to $\lambda/\alpha$; and (ii) the reference measure is changed from~$Q$ to~$Q_1$.  The second effect might imply a concentration of the measure given that $Q_1 \ll Q$. More specifically if $\supp Q_1 \subset \supp Q$, then, $\supp P^{(h,Q_1,\frac{\lambda}{\alpha})}_{Y\mid X=x}  = \supp Q_1 \subset \supp  Q =  \supp P^{(h,Q,\lambda)}_{Y\mid X=x}$. 
Interestingly, both effects are not necessarily simultaneously observed, which follows from the fact that $\alpha$ and $Q_1$ can be chosen independently. For instance, if $Q_1$ is chosen to be identical to $Q$, then, only the change of the regularization factor is observed. Alternatively, if $\alpha = 1$, only the change of reference measure is observed.
This observation reveals a structural property of the following optimization problems
\begin{IEEEeqnarray}{rCl}
\label{EqMarch5at14h56in2026SophiaA}
\left\lbrace
\begin{array}{l}
\displaystyle\min_{P\in\simplexabs{Q}{\set{Y}}}  \int h\autoparent{x,y}\,\mathrm{d}P(y) + \lambda \KL{P}{Q},  \mbox{ $\lambda > 0$} \\
\displaystyle\max_{P\in\simplexabs{Q}{\set{Y}}}  \int h\autoparent{x,y}\,\mathrm{d}P(y)+ \lambda \KL{P}{Q},  \mbox{ $\lambda < 0$},
\end{array}
\right.
\end{IEEEeqnarray}
whose unique solutions, if they exist, are both identical to the Gibbs probability measure $P_{Y\mid X = x}^{(h, Q, \lambda)}$ in~\eqref{EqGeneralGibbsRN}. See \cite[Lemma~1]{PerlazaEntropy2025}.
\enlargethispage{-0.10in}
More specifically, if the solution to one of the optimization problems in \eqref{EqMarch5at14h56in2026SophiaA} is known, an $(\alpha, Q, Q_1)$-renormalization of such a solution leads to the solution, if it exists, to one of the following optimization problems:
\begin{IEEEeqnarray}{rCl}
\label{EqApril19at22h17in2026Nice}
\left\lbrace
\begin{array}{l}
\displaystyle\min_{P\in\simplexabs{Q_1}{\set{Y}}}  \int h\autoparent{x,y}\,\mathrm{d}P(y) \;+\;\frac{\lambda}{\alpha}\,\KL{P}{Q_1},  \mbox{ $\frac{\lambda}{\alpha} > 0$} \\
\displaystyle\max_{P\in\simplexabs{Q_1}{\set{Y}}}  \int h\autoparent{x,y}\,\mathrm{d}P(y) \;+\;\frac{\lambda}{\alpha}\,\KL{P}{Q_1},  \mbox{ $\frac{\lambda}{\alpha} < 0$}.
\end{array}
\right. 
\end{IEEEeqnarray}

\subsection{Normalized Log-Linear Combinations of Gibbs Measures}
\label{SecLogLinearMain}

The normalization of the product of powers of probability measures is formalized as follows. 
\begin{definition}[Normalized Log-Linear Combination of Measures]
\label{DefApril14at23h55in2026Antibes}
Consider $\alpha_1 \in \reals\setminus\lbrace 0 \rbrace$, $\alpha_2 \in \reals\setminus\lbrace 0 \rbrace$, two probability measures $P_1$ and $P_2$; and three $\sigma$-finite measures $Q$, $Q_1$ and $Q_2$, all on the same measurable space, such that $P_1 \ll Q_1$, $Q\ll Q_1$, $P_2 \ll Q_2$ and $Q\ll Q_2$. The $(\alpha_1, \alpha_2, Q_1, Q_2, Q)$-normalized log-linear combination of $P_1$ and $P_2$ is a probability measure, denoted by $S$, such that for all $y \in \supp Q$,
\begin{IEEEeqnarray}{rCl}
\label{EqApril15at18h35in2026Sophia}
\RND{S}{Q}\autoparent{y}
&=&
\frac{\autoparent{\RND{P_1}{Q_1}\autoparent{y}}^{\alpha_1} \autoparent{\RND{P_2}{Q_2}\autoparent{y}}^{\alpha_2}}
{\displaystyle\int \autoparent{\RND{P_1}{Q_1}\autoparent{\hat{y}}}^{\alpha_1} \autoparent{\RND{P_2}{Q_2}\autoparent{\hat{y}}}^{\alpha_2}\mathrm{d}Q\autoparent{\hat{y}}},
\end{IEEEeqnarray}
provided that 
$0
<
{\displaystyle\int \autoparent{\RND{P_1}{Q_1}\autoparent{\hat{y}}}^{\alpha_1} \autoparent{\RND{P_2}{Q_2}\autoparent{\hat{y}}}^{\alpha_2}\mathrm{d}Q\autoparent{\hat{y}}} < + \infty.
$
\end{definition}
\enlargethispage{-0.0in}
Definition~\ref{DefApril14at23h55in2026Antibes}, which is restricted to the normalized log-linear combination of only two probability measures, can be generalized to normalized log-linear combinations of multiple probability measures. Nonetheless,5 for pedagogical purposes, this work focuses exclusively on the case of two Gibbs probability measures.
Consider two Borel measurable functions $
h_1: \set{X}\times\set{Y}\to\reals $ and $
h_2: \set{X}\times\set{Y}\to\reals$.
Given $x_1\in \set{X}$, consider an $(h_1,Q_1,\lambda_1)$-Gibbs probability measure, denoted by
\vspace{-1ex}
\begin{IEEEeqnarray}{rCl}
\label{EqApril15at18h05in2026Sophia}
P^{(h_1,Q_1,\lambda_1)}_{Y_1\mid X_1 = x_1} \in \simplex{\set{Y}}.
\end{IEEEeqnarray}
Given $x_2 \in \set{X}$, consider an $(h_2,Q_2,\lambda_2)$-Gibbs probability measure, denoted by
\vspace{-2ex}
\begin{IEEEeqnarray}{rCl}
\label{EqApril22at21h10in2026Nice}
P^{(h_2,Q_2,\lambda_2)}_{Y_2\mid X_2 = x_2} \in \simplex{\set{Y}}.
\end{IEEEeqnarray}
The $(\alpha_1, \alpha_2, Q_1, Q_2, Q)$-normalized log-linear combination of the Gibbs probability measures $P^{(h_1,Q_1,\lambda_1)}_{Y_1\mid X_1 = x_1}$ in \eqref{EqApril15at18h05in2026Sophia} and $P^{(h_2,Q_2,\lambda_2)}_{Y_2\mid X_2 = x_2}$ in~\eqref{EqApril22at21h10in2026Nice}, denoted by $S_1 \in \simplex{\set{Y}}$, satisfies for all $y \in \supp Q$
\vspace{-2.5ex}
\begin{IEEEeqnarray}{rCl}
\label{EqApril16at10h36in2026Antibes}
\RND{S_1}{Q}\hspace{-0.6ex}\autoparent{y}
&=&
\frac{\autoparent{\RND{P_{Y_1\mid X_1 = x_1}^{(h_1, Q_1, \lambda_1)}}{Q_1}\autoparent{y}}^{\alpha_1} \autoparent{\RND{P_{Y_2\mid X_2 = x_2}^{(h_2, Q_2, \lambda_2)}}{Q_2}\autoparent{y}}^{\alpha_2}}
{\displaystyle\int \autoparent{\RND{P_{Y_1\mid X_1 = x_1}^{(h_1, Q_1, \lambda_1)}}{Q_1}\autoparent{\tilde{y}}}^{\alpha_1} \autoparent{\RND{P_{Y_2\mid X_2 = x_2}^{(h_2, Q_2, \lambda_2)}}{Q_2}\autoparent{\tilde{y}}}^{\alpha_2} \mathrm{d}Q\autoparent{\tilde{y}}}.\middlesqueezeequ \spnum
\end{IEEEeqnarray}
The following theorem introduces an explicit expression for the Radon-Nikodym derivative $\RND{S_1}{Q}$ in \eqref{EqApril16at10h36in2026Antibes}. For doing so, consider the following function
\vspace{-1ex}
\begin{IEEEeqnarray}{rCl}
\label{EqEffectivehApril14}
\hat{h}_{\beta_1,\beta_2}
&:&
\longfunction{ \set{X} \times \set{X} \times \set{Y}}{\reals}{\autoparent{x_1,x_2,y}}
{\beta_1 h_1\autoparent{x_1,y}
+
\beta_2 h_2\autoparent{x_2,y},}
\end{IEEEeqnarray}
where $(\beta_1, \beta_2)$ is a pair of given parameters.
\begin{theorem}
\label{TheoremApril16at15h19in2026Antibes}
The $(\alpha_1, \alpha_2, Q_1, Q_2, Q)$-normalized log-linear combination of 
$P^{(h_1,Q_1,\lambda_1)}_{Y_1\mid X_1 = x_1}$ in \eqref{EqApril15at18h05in2026Sophia} and $P^{(h_2,Q_2,\lambda_2)}_{Y_2\mid X_2 = x_2}$ in~\eqref{EqApril22at21h10in2026Nice}, namely the probability measure $S_1$ in \eqref{EqApril16at10h36in2026Antibes},  satisfies for all $y \in \supp Q$,
\vspace{-2ex}
\begin{IEEEeqnarray}{rCl}
\label{EqApril15at22h01in2026Antibes}
\RND{S_1}{Q}\autoparent{y}
&=&
\frac{
\exp\autoparent{
-\frac{\alpha_1}{\lambda_1}h_1\autoparent{x_1,y}
-\frac{\alpha_2}{\lambda_2}h_2\autoparent{x_2,y}}}
{\int
\exp\autoparent{
-\frac{\alpha_1}{\lambda_1}h_1\autoparent{x_1,\hat{y}}
-\frac{\alpha_2}{\lambda_2}h_2\autoparent{x_2,\hat{y}}}
\mathrm{d}Q\autoparent{\hat{y}}} \spnum\\
& = & 
\RND{P_{Y\mid (X_1,X_2)=(x_1,x_2)}^{\autoparent{\hat{h}_{\frac{\alpha_1}{\lambda_1},\frac{\alpha_2}{\lambda_2}},Q,1}}}{Q} (y),
\vspace{-1ex}
\end{IEEEeqnarray}
where $P_{Y\mid (X_1,X_2)}^{\autoparent{\hat{h}_{\frac{\alpha_1}{\lambda_1},\frac{\alpha_2}{\lambda_2}},Q,1}}$ is an $(\hat{h}_{\frac{\alpha_1}{\lambda_1},\frac{\alpha_2}{\lambda_2}},Q_1,1)$-Gibbs conditional probability measure in $\simplex{\set{Y} | \set{X} \times \set{X}}$; and the function $\hat{h}_{\frac{\alpha_1}{\lambda_1},\frac{\alpha_2}{\lambda_2}}$ is defined in \eqref{EqEffectivehApril14}.
\end{theorem}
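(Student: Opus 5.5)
The plan is a direct substitution followed by cancellation of normalizing constants. Write the normalizing constants of the two given Gibbs measures as
$K_1(x_1) \triangleq \int \exp\autoparent{-\frac{1}{\lambda_1}h_1\autoparent{x_1,y}}\mathrm{d}Q_1(y)$ and $K_2(x_2) \triangleq \int \exp\autoparent{-\frac{1}{\lambda_2}h_2\autoparent{x_2,y}}\mathrm{d}Q_2(y)$. By \eqref{EqGeneralGibbsIntegrability}, both lie in $(0,+\infty)$.

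First, I will check that \eqref{EqGeneralGibbsRN} can be applied to the Radon-Nikodym derivatives appearing in \eqref{EqApril16at10h36in2026Antibes} at every $y \in \supp Q$. Since $Q \ll Q_1$ and $Q \ll Q_2$, the set $\supp Q$ is contained in $\supp Q_1$ and in $\supp Q_2$, up to $Q$-null sets. Hence, for $Q$-almost every $y \in \supp Q$,
\begin{IEEEeqnarray}{rCl}
\autoparent{\RND{P_{Y_i\mid X_i = x_i}^{(h_i, Q_i, \lambda_i)}}{Q_i}\autoparent{y}}^{\alpha_i} & = & K_i(x_i)^{-\alpha_i}\exp\autoparent{-\frac{\alpha_i}{\lambda_i}h_i\autoparent{x_i,y}}, \quad i\in\{1,2\}. \nonumber
\end{IEEEeqnarray}
This is the only place where care is needed. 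Radon-Nikodym derivatives are defined only almost everywhere, so the identity holds $Q$-a.e., which is all that a statement about $\RND{S_1}{Q}$ requires. Raising a positive quantity to the real power $\alpha_i$ poses no difficulty, because the exponential is strictly positive.

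Second, I will substitute these expressions into both the numerator and the denominator of \eqref{EqApril16at10h36in2026Antibes}. The constant $K_1(x_1)^{-\alpha_1}K_2(x_2)^{-\alpha_2}$ does not depend on $y$ or $\tilde y$, so it factors out of the integral and cancels. What remains is exactly the first expression in \eqref{EqApril15at22h01in2026Antibes}. In the denominator, the two exponentials combine into one, since $\exp(a)\exp(b)=\exp(a+b)$.

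Third, I will identify this expression as a Gibbs density. By \eqref{EqEffectivehApril14} with $\beta_i = \alpha_i/\lambda_i$, the exponent equals $-\frac{1}{1}\hat{h}_{\frac{\alpha_1}{\lambda_1},\frac{\alpha_2}{\lambda_2}}(x_1,x_2,y)$. The integrability condition \eqref{EqGeneralGibbsIntegrability} for $(\hat h, Q, 1)$ at the point $(x_1,x_2)$ is exactly the proviso in Definition~\ref{DefApril14at23h55in2026Antibes}, once the positive finite constants are divided out. It therefore holds whenever $S_1$ is well defined. Comparing with \eqref{EqGeneralGibbsRN} then yields the second equality in \eqref{EqApril15at22h01in2026Antibes}.

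The main obstacle is bookkeeping rather than analysis. The only points requiring care are the support/absolute-continuity argument in the first step and the matching of the integrability condition with the proviso. Note that the Gibbs measure obtained here has reference measure $Q$, even though the statement mentions $Q_1$.
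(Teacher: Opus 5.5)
Your proposal is correct and follows essentially the same route as the paper's proof. You substitute the Gibbs densities \eqref{EqGeneralGibbsRN} of both measures into \eqref{EqApril16at10h36in2026Antibes}, cancel the $y$-independent factor $K_1(x_1)^{-\alpha_1}K_2(x_2)^{-\alpha_2}$, and recognize the result as the $(\hat{h}_{\frac{\alpha_1}{\lambda_1},\frac{\alpha_2}{\lambda_2}},Q,1)$-Gibbs density, whose integrability condition is the proviso of Definition~\ref{DefApril14at23h55in2026Antibes}.

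Your side remarks are also right. The conditions $Q\ll Q_1$ and $Q\ll Q_2$ are exactly what justify evaluating the densities on $\supp Q$; in fact $\supp Q\subseteq\supp Q_1\cap\supp Q_2$ holds exactly, not just up to null sets. The reference measure in the conclusion is $Q$, so the ``$Q_1$'' in the statement is a typo.
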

\begin{IEEEproof}
The proof is presented in \cite[Appendix F]{InriaRR9613}.
\end{IEEEproof}
\enlargethispage{-0.10in}
Theorem~\ref{TheoremApril16at15h19in2026Antibes} establishes that the $(\alpha_1, \alpha_2, Q_1, Q_2, Q)$-normalized log-linear combination of 
$P^{(h_1,Q_1,\lambda_1)}_{Y_1\mid X_1 = x_1}$ in \eqref{EqApril15at18h05in2026Sophia} and $P^{(h_2,Q_2,\lambda_2)}_{Y_2\mid X_2 = x_2}$ in~\eqref{EqApril22at21h10in2026Nice} is identical to an $\autoparent{\hat{h}_{\frac{\alpha_1}{\lambda_1}, \frac{\alpha_2}{\lambda_2}},Q,1}$-Gibbs probability measure $P_{Y\mid (X_1,X_2)=(x_1,x_2)}^{\autoparent{\hat{h}_{\frac{\alpha_1}{\lambda_1}, \frac{\alpha_2}{\lambda_2}},Q,1}}$. More formally, 
\begin{IEEEeqnarray}{rCl}
\KL{S_1}{P_{Y\mid (X_1,X_2)=(x_1,x_2)}^{\autoparent{\hat{h}_{\frac{\alpha_1}{\lambda_1}, \frac{\alpha_2}{\lambda_2}},Q,1}}} = 0.
\end{IEEEeqnarray}
From this perspective, \cite[Lemma~1]{PerlazaEntropy2025} leads to the  conclusion that the $(\alpha_1, \alpha_2, Q_1, Q_2, Q)$-normalized log-linear combination of 
$P^{(h_1,Q_1,\lambda_1)}_{Y_1\mid X_1 = x_1}$ and $P^{(h_2,Q_2,\lambda_2)}_{Y_2\mid X_2 = x_2}$ is the unique solution, if it exists, to the minimization problem:
%
\begin{IEEEeqnarray}{rCl}
\label{EqApril22at22h51in2026Nice}
\min_{P \in \simplexabs{Q}{\set{Y}}} & &  \int \hat{h}_{\frac{\alpha_1}{\lambda_1},\frac{\alpha_2}{\lambda_2}}(x_1, x_2, y)  \mathrm{d}P(y) + \KL{P}{Q}.
\end{IEEEeqnarray}
%
%
This structural property sheds light on the open question concerning the choice of the reference measure $Q$ in \eqref{EqMarch5at14h56in2026SophiaA}, which is a central question in statistical learning theory. See for instance, \cite{perlaza2024empirical, bermudez2026machine, perlaza2023validation, bermudez2026decentralized}, and references therein.
\subsection{Nested Gibbs Probability Measures}

Nested Gibbs probability measures can be formally defined as follows.
\begin{definition}[Nested Gibbs Probability Measures]
\label{DefNestedGibbsMeasures}
An $(h, Q_1, \lambda)$-Gibbs probability measure $P^{(h, Q_1, \lambda)}_{Y \mid X = x}\in \triangle\left(\set{Y} \right)$, of the form  in~\eqref{EqGeneralGibbsRN}, is said to be a nested Gibbs probability measure if $Q_1$ is a Gibbs probability measure.
\end{definition}
In the following, the measure $P^{(h_1,Q_1,\lambda_1)}_{Y_1\mid X_1 = x_1}$ in \eqref{EqApril15at18h05in2026Sophia} is said to be nested within $P^{(h_2,Q_2,\lambda_2)}_{Y_2\mid X_2 = x_2}$ in~\eqref{EqApril22at21h10in2026Nice}, when $Q_2$ is chosen to be identical to $P^{(h_1,Q_1,\lambda_1)}_{Y_1\mid X_1 = x_1}$, which yields a $\autoparent{h_2,P^{(h_1,Q_1,\lambda_1)}_{Y_1\mid X_1 = x_1},\lambda_2}$-Gibbs probability measure, denoted by
\begin{IEEEeqnarray}{rCl}
\label{EqApril15at18h06in2026Sophia}
P^{\autoparent{h_2,P^{(h_1,Q_1,\lambda_1)}_{Y_1\mid X_1 = x_1},\lambda_2}}_{Y_2\mid X_2 = x_2}
\in
\simplex{\set{Y}}.
\end{IEEEeqnarray}
Such a measure is a nested Gibbs probability measure (Definition~\ref{DefNestedGibbsMeasures}).
The following theorem provides an explicit expression for the Radon-Nikodym derivative of $P^{\autoparent{h_2,P^{(h_1,Q_1,\lambda_1)}_{Y_1\mid X_1 = x_1},\lambda_2}}_{Y_2\mid X_2 = x_2}$  with respect to the measure~$Q_1$.
\begin{theorem}\label{TheoremApril16at15h22in2026Antibes}
The nested Gibbs probability measure $P^{\autoparent{h_2,P^{(h_1,Q_1,\lambda_1)}_{Y_1\mid X_1 = x_1},\lambda_2}}_{Y_2\mid X_2 = x_2}$ in~\eqref{EqApril15at18h06in2026Sophia} satisfies for all $y\in\supp Q_1$,
\begin{IEEEeqnarray}{rCl}
\label{EqNestedExplicit}
\nonumber
&&
\RND{P^{\autoparent{h_2,P^{(h_1,Q_1,\lambda_1)}_{Y_1\mid X_1 = x_1},\lambda_2}}_{Y_2\mid X_2 = x_2}}{Q_1}\autoparent{y}
\\
&=&
\frac{\exp\autoparent{-\frac{1}{\lambda_1} h_1\autoparent{x_1,y} - \frac{1}{\lambda_2} h_2\autoparent{x_2,y}}}
{\int \exp\autoparent{-\frac{1}{\lambda_1} h_1\autoparent{x_1,\tilde{y}} - \frac{1}{\lambda_2} h_2\autoparent{x_2,\tilde{y}}}\,\mathrm{d}Q_1(\tilde{y})}\\
\label{EqMarch18at9h48in2026Sophia}
& = & 
\RND{P_{Y\mid (X_1,X_2)=(x_1,x_2)}^{\autoparent{\hat{h}_{\frac{1}{\lambda_1},\frac{1}{\lambda_2}},Q_1,1}}}{Q_1} (y),
\end{IEEEeqnarray}
where $P_{Y\mid (X_1,X_2)}^{\autoparent{\hat{h}_{\frac{1}{\lambda_1},\frac{1}{\lambda_2}},Q_1,1}}$ is an $(\hat{h}_{\frac{1}{\lambda_1},\frac{1}{\lambda_2}},Q_1,1)$-Gibbs conditional probability measure in $\simplex{\set{Y} | \set{X} \times \set{X}}$; and the function $\hat{h}_{\frac{1}{\lambda_1},\frac{1}{\lambda_2}}$ is defined in \eqref{EqEffectivehApril14}.
\end{theorem}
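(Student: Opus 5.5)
The plan is to use the chain rule for Radon--Nikodym derivatives together with the explicit Gibbs densities from Definition~\ref{DefGeneralGibbs}. Write $P_1 \triangleq P^{(h_1,Q_1,\lambda_1)}_{Y_1\mid X_1 = x_1}$ and $P_{12} \triangleq P^{(h_2,P_1,\lambda_2)}_{Y_2\mid X_2 = x_2}$.

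By \eqref{EqGeneralGibbsRN}, $P_{12} \ll P_1$ and $P_1 \ll Q_1$, so $P_{12} \ll Q_1$. For $Q_1$-almost every $y$, the chain rule then gives
\begin{IEEEeqnarray}{rCl}
\RND{P_{12}}{Q_1}(y) &=& \RND{P_{12}}{P_1}(y)\,\RND{P_1}{Q_1}(y) \nonumber\\
&=& \frac{\exp\autoparent{-\frac{1}{\lambda_2}h_2(x_2,y)}}{Z_2}\cdot\frac{\exp\autoparent{-\frac{1}{\lambda_1}h_1(x_1,y)}}{Z_1}, \nonumber
\end{IEEEeqnarray}
where $Z_1 = \int \exp\autoparent{-\frac{1}{\lambda_1}h_1(x_1,\tilde y)}\mathrm{d}Q_1(\tilde y)$ and $Z_2 = \int \exp\autoparent{-\frac{1}{\lambda_2}h_2(x_2,\tilde y)}\mathrm{d}P_1(\tilde y)$. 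Both normalizers are finite and positive by the integrability condition \eqref{EqGeneralGibbsIntegrability} applied to each Gibbs measure.

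Next I identify the product $Z_1 Z_2$. Changing measure in $Z_2$ using the density of $P_1$ with respect to $Q_1$ yields
\begin{IEEEeqnarray}{rCl}
Z_2 = \frac{1}{Z_1}\int \exp\autoparent{-\tfrac{1}{\lambda_1}h_1(x_1,\tilde y)-\tfrac{1}{\lambda_2}h_2(x_2,\tilde y)}\mathrm{d}Q_1(\tilde y). \nonumber
\end{IEEEeqnarray}
Hence $Z_1 Z_2$ equals exactly the denominator in \eqref{EqNestedExplicit}. This proves the first equality.

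The same computation shows that this integral lies strictly between $0$ and $+\infty$, since it equals $Z_1 Z_2$. So the integrability condition \eqref{EqGeneralGibbsIntegrability} holds for $\hat{h}_{\frac{1}{\lambda_1},\frac{1}{\lambda_2}}$ with reference $Q_1$ and regularization factor $1$. The second equality \eqref{EqMarch18at9h48in2026Sophia} then follows by reading off Definition~\ref{DefGeneralGibbs} with the function \eqref{EqEffectivehApril14}.

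The main obstacle is the measure-theoretic bookkeeping rather than the algebra:
\begin{itemize}
\item One must justify the chain rule, which requires $\sigma$-finiteness of $Q_1$ and the absolute continuity shown above.
\item One must handle the fact that \eqref{EqGeneralGibbsRN} is stated pointwise on $\supp Q_1$, whereas Radon--Nikodym derivatives are only defined up to $Q_1$-null sets. I would fix the canonical versions given by the explicit exponential formulas, so the identity holds for every $y \in \supp Q_1$.
\item One must note that $\supp P_1 = \supp Q_1$, because the exponential density is strictly positive. Consequently the pointwise identity on $\supp P_1$ required by \eqref{EqGeneralGibbsRN} for $P_{12}$ covers all of $\supp Q_1$.
\end{itemize}
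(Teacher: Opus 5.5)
Your proposal is correct and follows essentially the same route as the paper's proof. You apply the chain rule $\RND{P_{12}}{Q_1} = \RND{P_{12}}{P_1}\RND{P_1}{Q_1}$ with the explicit Gibbs densities, then rewrite $\int \exp\bigl(-\frac{1}{\lambda_2}h_2(x_2,\cdot)\bigr)\,\mathrm{d}P_1$ as a $Q_1$-integral, so the two normalizers collapse into the single denominator of \eqref{EqNestedExplicit}; your handling of versions via $\supp P_1 = \supp Q_1$ is exactly what makes the identity hold pointwise on $\supp Q_1$.
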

\begin{IEEEproof} 
The proof is presented in \cite[Appendix~D]{InriaRR9613}.
\end{IEEEproof}
\enlargethispage{-0.10in}
Theorem~\ref{TheoremApril16at15h22in2026Antibes} establishes that the nested Gibbs probability measure $P_{Y_2\mid X_2 = x_2}^{\autoparent{h_2, P^{(h_1,Q_1,\lambda_1)}_{Y_1\mid X_1 = x_1}, \lambda_2}}$ is identical to an $(\hat{h}_{\frac{1}{\lambda_1}, \frac{1}{\lambda_2}},Q_1,1)$-Gibbs probability measure $P_{Y\mid (X_1,X_2)=(x_1,x_2)}^{(\hat{h}_{\frac{1}{\lambda_1}, \frac{1}{\lambda_2}},Q_1,1)}$. More formally, 
\vspace{-2ex}
\begin{IEEEeqnarray}{rCl}
\KL{P_{Y_2\mid X_2 = x_2}^{\autoparent{h_2, P^{(h_1,Q_1,\lambda_1)}_{Y_1\mid X_1 = x_1}, \lambda_2}}}{P_{Y\mid (X_1,X_2)=(x_1,x_2)}^{\autoparent{\hat{h}_{\frac{1}{\lambda_1}, \frac{1}{\lambda_2}},Q_1,1}}} =  0.
\end{IEEEeqnarray}
\enlargethispage{-0.02in}
\vspace{-2ex}

Hence, the effect of nesting $P^{(h_1,Q_1,\lambda_1)}_{Y_1\mid X_1 = x_1}$ in \eqref{EqApril15at18h05in2026Sophia} into the $(h_2,Q_2,\lambda_2)$-Gibbs probability measure $P_{Y_2\mid X_2 = x_2}^{\autoparent{h_2, Q_2, \lambda_2}}$ in~\eqref{EqApril22at21h10in2026Nice}, i.e., choosing $Q_2$ identical to $P^{(h_1,Q_1,\lambda_1)}_{Y_1\mid X_1 = x_1}$, has a twofold effect on $P_{Y_2\mid X_2 = x_2}^{\autoparent{h_2, Q_2, \lambda_2}}$. 
First, it transforms the objective function $h_2$ into a linear combination of $h_1$ and $h_2$, with coefficients $\frac{1}{\lambda_1}$ and $\frac{1}{\lambda_2}$, which yields $\hat{h}_{\frac{1}{\lambda_1}, \frac{1}{\lambda_2}}$ in \eqref{EqEffectivehApril14}.
Second, it changes the regularization factor from $\lambda_2$ to $1$.
These observations, together with \cite[Lemma~1]{PerlazaEntropy2025}, lead to the conclusion that the nested Gibbs probability measure~$P_{Y_2\mid X_2 = x_2}^{\autoparent{h_2, P^{(h_1,Q_1,\lambda_1)}_{Y_1\mid X_1 = x_1}, \lambda_2}}$ is the unique solution, if it exists, to one of the following problems:
\begin{IEEEeqnarray}{lCl}
\label{EqApril22at15h33in2026Nice}
\left\lbrace
\begin{array}{l}
\hspace{-1ex}\displaystyle\min_{P\in\simplexabs{Q_1}{\set{Y}}}  \int h_2\autoparent{x_2,y}\,\mathrm{d}P(y) +\lambda_2 \KL{P}{P^{(h_1,Q_1,\lambda_1)}_{Y_1\mid X_1 = x_1}},  \mbox{ $\lambda_2 > 0$} \Dsupersqueezeequ \\
\hspace{-1ex}\displaystyle\max_{P\in\simplexabs{Q_1}{\set{Y}}}  \int h_2\autoparent{x_2,y}\,\mathrm{d}P(y) +\lambda_2 \KL{P}{P^{(h_1,Q_1,\lambda_1)}_{Y_1\mid X_1 = x_1}},  \mbox{ $\lambda_2 < 0$}, \spnum \Dsupersqueezeequ
\end{array}
\right. \hspace{-4ex}
\end{IEEEeqnarray}
and at the same time, the unique solution, if it exists, to the minimization problem:
\begin{IEEEeqnarray}{rCl}
\label{EqMarch17at6h43in2026BusToSophiaA}
\min_{P \in \simplexabs{Q_1}{\set{Y}}} & &  \int \hat{h}_{\frac{1}{\lambda_1},\frac{1}{\lambda_2}}(x_1, x_2, y)  \mathrm{d}P(y) + \KL{P}{Q_1},
\end{IEEEeqnarray}
which is reminiscent of the optimization problem in \eqref{EqApril22at22h51in2026Nice}.
In particular, note that these problems become the same if $\alpha_1 = \alpha_2 = 1$ and $Q$ and $Q_1$ are identical. This observation leads to the following corollary, which establishes an equivalence between a  normalized log-linear combination of $P^{(h_1,Q_1,\lambda_1)}_{Y_1\mid X_1 = x_1}$ in \eqref{EqApril15at18h05in2026Sophia} and $P^{(h_2,Q_2,\lambda_2)}_{Y_2\mid X_2 = x_2}$ in~\eqref{EqApril22at21h10in2026Nice} and the nested Gibbs probability measure $P^{(h_2,P^{(h_1,Q_1,\lambda_1)}_{Y_1\mid X_1 = x_1},\lambda_2)}_{Y_2\mid X_2 = x_2}$ in~\eqref{EqApril15at18h06in2026Sophia}.
More specifically, denote by $S_2 \in \simplexabs{Q_1}{\set{Y}}$, the $(1,1,Q_1,Q_2,Q_1)$-normalized log-linear combination of $P^{(h_1,Q_1,\lambda_1)}_{Y_1\mid X_1 = x_1}$ and $P^{(h_2,Q_2,\lambda_2)}_{Y_2\mid X_2 = x_2}$. Hence, for all $y \in \supp Q_1$,
\vspace{-2ex}
\begin{IEEEeqnarray}{rCl}
\label{EqApril15at22h01in2026Antibes}
\RND{S_2}{Q_1}\autoparent{y}
&=&
\frac{
\exp\autoparent{
-\frac{1}{\lambda_1}h_1\autoparent{x_1,y}
-\frac{1}{\lambda_2}h_2\autoparent{x_2,y}}}
{\int
\exp\autoparent{
-\frac{1}{\lambda_1}h_1\autoparent{x_1,\hat{y}}
-\frac{1}{\lambda_2}h_2\autoparent{x_2,\hat{y}}}
\mathrm{d}Q_1\autoparent{\hat{y}}}  \spnum\\
\label{EqApril15at22h01in2026AntibesB}
& = & 
\RND{P^{\autoparent{h_2,P^{(h_1,Q_1,\lambda_1)}_{Y_1\mid X_1 = x_1},\lambda_2}}_{Y_2\mid X_2 = x_2}}{Q_1}\autoparent{y},
\end{IEEEeqnarray}
\enlargethispage{-0.02in}
\noindent
where the equality in~\eqref{EqApril15at22h01in2026Antibes} follows from Theorem~\ref{TheoremApril16at15h19in2026Antibes};
and the equality in~\eqref{EqApril15at22h01in2026AntibesB} follows from Theorem~\ref{TheoremApril16at15h22in2026Antibes}.
 Using this notation, the following holds.
\begin{corollary}
\label{CorollaryApril16at15h20in2026Antibes}
The nested Gibbs probability measure $P^{(h_2,P^{(h_1,Q_1,\lambda_1)}_{Y_1\mid X_1 = x_1},\lambda_2)}_{Y_2\mid X_2 = x_2}$ in~\eqref{EqApril15at18h06in2026Sophia} and the Gibbs probability measure $S_2$ in~\eqref{EqApril15at22h01in2026Antibes} are identical.  That is,
\begin{IEEEeqnarray}{rcl}
\KL{S_2}{P^{(h_2,P^{(h_1,Q_1,\lambda_1)}_{Y_1\mid X_1 = x_1},\lambda_2)}_{Y_2\mid X_2 = x_2}} & = & 0.
\end{IEEEeqnarray}

\end{corollary}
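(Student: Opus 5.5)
The plan is to show that $S_2$ and the nested measure have the same Radon--Nikodym derivative with respect to the common dominating measure $Q_1$, $Q_1$-almost everywhere. Equality of the two measures then follows, and so does the vanishing of the relative entropy.

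\textbf{Step 1: compute $\RND{S_2}{Q_1}$.} Apply Theorem~\ref{TheoremApril16at15h19in2026Antibes} with $\alpha_1=\alpha_2=1$ and $Q=Q_1$. The hypotheses $Q\ll Q_1$ and $Q\ll Q_2$ of Definition~\ref{DefApril14at23h55in2026Antibes} become $Q_1\ll Q_1$ and $Q_1\ll Q_2$; these are assumed, as in the construction of $S_2$. The theorem then says that, for all $y\in\supp Q_1$, $\RND{S_2}{Q_1}(y)$ is the Gibbs density in \eqref{EqApril15at22h01in2026Antibes}, with exponent $-\frac{1}{\lambda_1}h_1(x_1,y)-\frac{1}{\lambda_2}h_2(x_2,y)$, normalized by its integral with respect to $Q_1$.

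\textbf{Step 2: compute the nested density.} By Theorem~\ref{TheoremApril16at15h22in2026Antibes}, $\RND{P^{(h_2,P^{(h_1,Q_1,\lambda_1)}_{Y_1\mid X_1 = x_1},\lambda_2)}_{Y_2\mid X_2 = x_2}}{Q_1}(y)$ is exactly the right-hand side of \eqref{EqNestedExplicit} on $\supp Q_1$. This is the same expression as in Step 1, so the two densities with respect to $Q_1$ coincide on $\supp Q_1$, hence $Q_1$-a.e. This is precisely the chain \eqref{EqApril15at22h01in2026Antibes}--\eqref{EqApril15at22h01in2026AntibesB}.

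\textbf{Step 3: conclude.} Both measures are absolutely continuous with respect to $Q_1$: $S_2$ by construction, and the nested measure because $P^{(h_1,Q_1,\lambda_1)}_{Y_1\mid X_1=x_1}\ll Q_1$. By the chain rule for Radon--Nikodym derivatives,
\[
\RND{S_2}{P^{(h_2,P^{(h_1,Q_1,\lambda_1)}_{Y_1\mid X_1 = x_1},\lambda_2)}_{Y_2\mid X_2 = x_2}} = 1
\]
almost everywhere. Hence the two measures are equal on every Borel set, and the relative entropy, being the integral of the logarithm of this derivative, is $0$.

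The main obstacle is one of well-posedness rather than computation. We must check that the normalizing integral
\[
\int \exp\!\Big(-\tfrac{1}{\lambda_1}h_1(x_1,\tilde y)-\tfrac{1}{\lambda_2}h_2(x_2,\tilde y)\Big)\,\mathrm{d}Q_1(\tilde y)
\]
lies in $(0,\infty)$. We must also check that the nested measure, whose density is defined relative to $P^{(h_1,Q_1,\lambda_1)}_{Y_1\mid X_1=x_1}$, can legitimately be rewritten with respect to $Q_1$ via the chain rule. For the first point, I would argue that finiteness of this integral is equivalent to the integrability condition \eqref{EqGeneralGibbsIntegrability} for the nested measure, after multiplying through by the finite, positive normalizer of the $(h_1,Q_1,\lambda_1)$-Gibbs measure. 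Both theorems invoked already presuppose this condition, so it suffices to note that the two existence conditions are identical.
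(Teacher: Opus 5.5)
Your proposal is correct and is essentially the paper's argument. The corollary follows from the chain of equalities preceding it: Theorem~\ref{TheoremApril16at15h19in2026Antibes} with $\alpha_1=\alpha_2=1$ and $Q=Q_1$, together with Theorem~\ref{TheoremApril16at15h22in2026Antibes}, gives both measures the same density with respect to $Q_1$. Your added well-posedness check is also correct: up to the finite positive normalizers of the two original Gibbs measures, the existence conditions of $S_2$ and of the nested measure both reduce to finiteness of the same integral against $Q_1$, which the paper leaves implicit.
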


\section{Applications in One-Shot Federated Learning}
\enlargethispage{-0.10in}
Consider a federated learning system in which $K$ clients collaboratively tune
their local learning algorithms by communicating with a common server. Let 
$\set{M}\subseteq\reals^{d}$, with $d\in\ints$, and let $\set{X}$; and 
$\set{Y}$, denote the sets of \emph{models}, \emph{patterns}, and
\emph{labels}, respectively. For all $k \inCountK{K}$, client $k$ has $n_k$
training data points $(x_{k,1}, y_{k,1}),\ldots,(x_{k,n_k}, y_{k,n_k})$,
which are elements of the set $\set{Z}\triangleq\set{X}\times\set{Y}$, and form the
dataset
\vspace{-1ex}
\begin{IEEEeqnarray}{rCl}
\label{EqDatasetK}
\vect{z}_k & \triangleq & \autoparent{(x_{k,1}, y_{k,1}), \ldots, (x_{k,n_k}, y_{k,n_k})}\in \set{Z}^{n_k}.
\end{IEEEeqnarray}
The aggregation of all local training datasets is denoted by
\vspace{-1ex}
\begin{IEEEeqnarray}{rCl}
\label{EqDatasetZero}
\vect{z}_0 & \triangleq & \autoparent{\vect{z}_1, \ldots, \vect{z}_K } \in \set{Z}^{n_0},
\end{IEEEeqnarray}
with $n_0 \triangleq \sum_{k=1}^{K} n_k$.
Given a model $\vect{\theta}\in\set{M}$, the loss induced by such a model on a
data point~$(x,y)\in\set{Z}$ is denoted by $\ell(x,y,\vect{\theta})$, where $\ell:\set{Z}\times\set{M}\to[0,+\infty)$ is referred to as the \emph{loss function}, which is assumed to be Borel
measurable.

The \emph{empirical risk} induced by a model with respect to an $m$-length
dataset is determined by the function
\vspace{-1ex}
\begin{IEEEeqnarray}{rCl}
\label{EqEmpiricalRisk}
\mathsf{L}_{m}:\function{\set{Z}^{m}\times\set{M}}{[0,+\infty)}{\autoparent{\vect{z},\vect{\theta}}}
{\frac{1}{m}\sum_{i=1}^{m}\ell\autoparent{x_{i},y_{i},\vect{\theta}}.}
\end{IEEEeqnarray}

A supervised learning algorithm trained upon $n_k$-length
datasets is represented by a conditional probability measure in
$\simplex{\set{M} \mid \set{Z}^{n_k}}$.
A class of algorithms that is central in this section is that of Gibbs
algorithms. For all $k \inCountK{K}$, given a $\sigma$-finite measure $Q_k$
defined on $\set{M}$, a regularization factor
$\lambda_k \in (0,+\infty)$, and a fixed dataset
$\vect{z}_k \in \set{Z}^{n_k}$, the instance of the Gibbs algorithm at
client~$k$, trained upon the dataset $\vect{z}_k$, is represented by the
Gibbs probability measure\vspace{-1ex}
\begin{IEEEeqnarray}{rCl}
\label{EqGibbsClientKFixedDataset}
P^{(\mathsf{L}_{n_k},Q_k,\lambda_k)}_{\vect{\Theta}_k\mid \vect{Z}_k=\vect{z}_k}
&\in&
\simplexabs{Q_k}{\set{M}}.
\end{IEEEeqnarray}
%
Consider a one-shot federated learning system (two communication phases). In the first phase, clients transmit their local Gibbs algorithms to the server. During the second phase, the server aggregates these local algorithms by means of a normalized log-linear combination and broadcasts the resulting probability measure to all clients.
More specifically, given a $\sigma$-finite measure $Q$ defined on $\mathcal{M}$, such that $Q\ll Q_k$ for all $k \inCountK{K}$ and the coefficients $\alpha_1,\ldots,\alpha_K\in\mathbb {R}\setminus \{0\}$, the server defines the measure
$S_{K}\in\simplexabs{Q}{\set{M}}$ such that for all $\vect{\theta} \in \supp Q$,
\vspace{-2ex}
\begin{IEEEeqnarray}{rCl}
\label{EqApril23at18h56in2026Antibes}
\RND{S_{K}}{Q}(\vect{\theta})
&=&
\frac{
\displaystyle\prod_{k=1}^{K}
\left(\RND{P^{\autoparent{\mathsf{L}_{n_k}, Q_k, \lambda_k}}_{\vect{\Theta}_k\mid \vect{Z}_k=\vect{z}_k}}{Q_k}(\vect{\theta})
\right)^{\alpha_k}}
{\displaystyle\int\prod_{k=1}^{K}\left(\RND{P^{\autoparent{\mathsf{L}_{n_k}, Q_k, \lambda_k}}_{\vect{\Theta}_k\mid \vect{Z}_k=\vect{z}_k}}{Q_k}(\vect{\nu}) \right)^{\alpha_k} dQ(\vect{\nu})}
\end{IEEEeqnarray}
\begin{IEEEeqnarray}{rCl}
\label{EqApril23at18h31in2026Antibes}
&=&
\frac{
\exp\left(
-\displaystyle\sum_{k=1}^{K}\frac{\alpha_k}{ \lambda_k}
\mathsf{L}_{n_k}(\vect{z}_k,\vect{\theta})
\right)}
{\displaystyle\int
\exp\left(
-\sum_{k=1}^{K}\frac{\alpha_k}{\lambda_k}
\mathsf{L}_{n_k}(\vect{z}_k,\vect{\nu})
\right)
\,\mathrm{d}Q(\vect{\nu})},
\end{IEEEeqnarray}
where, for all $k \inCountK{K}$, it has been assumed that $Q \ll Q_k$. The equality
in~\eqref{EqApril23at18h31in2026Antibes} follows from iteratively using Theorem~\ref{TheoremApril16at15h19in2026Antibes}.
Moreover, from \cite[Lemma~1]{PerlazaEntropy2025}, it follows that the measure $S_{K}$ is the unique solution to
\begin{IEEEeqnarray}{rCl}
\label{EqApril26at20h06in2026Antibes}
\min_{P\in\simplexabs{Q}{\set{M}}}
&&
\int
\left(
\sum_{k=1}^{K}\frac{\alpha_k}{ \lambda_k}
\mathsf{L}_{n_k}(\vect{z}_k,\vect{\theta})
\right)\mathrm{d}P(\vect{\theta})
+
\KL{P}{Q}.\spnum
\end{IEEEeqnarray}

The following corollary formalizes this observation, which represents a centralized-performance guarantee of the federated learning system depicted above.

\begin{corollary}
\label{CorApril23at18h55in2026Antibes}
Assume that for all $k\inCountK{K}$, 
\begin{IEEEeqnarray}{rcL}
\lambda_k = \frac{1}{n_k}; \mbox{ and } \alpha_k = \frac{1}{n_0\lambda_0},
\end{IEEEeqnarray}
 in \eqref{EqApril23at18h56in2026Antibes}, for some $\lambda_0>0$. Then, for all $\vect{\theta}\in \supp Q$,
\begin{IEEEeqnarray}{rCl}
\label{EqApril23at18h47in2026Antibes}
\RND{S_{K}}{Q}(\vect{\theta})
&=&
\frac{
\exp\left(
\frac{-1}{\lambda_0}~\mathsf{L}_{n_0}(\vect{z}_0,\vect{\theta})
\right)}
{\displaystyle\int
\exp\left(
\frac{-1}{\lambda_0}~\mathsf{L}_{n_0}(\vect{z}_0,\vect{\nu})
\right)
\,\mathrm{d}Q(\vect{\nu})} = 
\RND{
P^{(\mathsf{L}_{n_0},Q,\lambda_0)}_{\vect{\Theta}\mid \vect{Z}=\vect{z}_0}
}{Q}(\vect{\theta}). \middlesqueezeequ \spnum
\end{IEEEeqnarray}
\end{corollary}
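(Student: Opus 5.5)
The plan is to start from the explicit form \eqref{EqApril23at18h31in2026Antibes} of $\RND{S_K}{Q}$, which follows from iterating Theorem~\ref{TheoremApril16at15h19in2026Antibes}. The proof then reduces to showing that, with the stated parameter choice, the exponent $\sum_{k=1}^{K}\frac{\alpha_k}{\lambda_k}\mathsf{L}_{n_k}(\vect{z}_k,\vect{\theta})$ equals $\frac{1}{\lambda_0}\mathsf{L}_{n_0}(\vect{z}_0,\vect{\theta})$ for every $\vect{\theta}$. The numerators and denominators in \eqref{EqApril23at18h31in2026Antibes} and \eqref{EqApril23at18h47in2026Antibes} then coincide pointwise, and the last equality in \eqref{EqApril23at18h47in2026Antibes} is simply Definition~\ref{DefGeneralGibbs} applied with $h=\mathsf{L}_{n_0}$, reference measure $Q$ and regularization factor $\lambda_0$.

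The key computation runs as follows. With $\lambda_k = 1/n_k$ and $\alpha_k = 1/(n_0\lambda_0)$, the coefficient becomes $\frac{\alpha_k}{\lambda_k} = \frac{n_k}{n_0\lambda_0}$. Multiplying by $\mathsf{L}_{n_k}(\vect{z}_k,\vect{\theta}) = \frac{1}{n_k}\sum_{i=1}^{n_k}\ell(x_{k,i},y_{k,i},\vect{\theta})$ from \eqref{EqEmpiricalRisk} gives $\frac{1}{n_0\lambda_0}\sum_{i=1}^{n_k}\ell(x_{k,i},y_{k,i},\vect{\theta})$. Summing over $k$ and using that $\vect{z}_0$ in \eqref{EqDatasetZero} is the concatenation of all local datasets, with $n_0=\sum_k n_k$, the double sum equals the sum of losses over all $n_0$ points of $\vect{z}_0$. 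That sum is $n_0\,\mathsf{L}_{n_0}(\vect{z}_0,\vect{\theta})$, so the total exponent is $\frac{1}{\lambda_0}\mathsf{L}_{n_0}(\vect{z}_0,\vect{\theta})$.

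I expect the main obstacle to be well-definedness rather than the algebra. The statement of \eqref{EqApril23at18h31in2026Antibes} requires $Q\ll Q_k$ for each $k$, and the normalizing integral must lie strictly between zero and infinity. Since $\ell\ge 0$ and $\lambda_0>0$, the integrand $\exp(-\mathsf{L}_{n_0}/\lambda_0)$ lies in $(0,1]$, so the integral is positive whenever $Q$ is nonzero. Finiteness holds if $Q$ is finite, and in general must be assumed, exactly as condition \eqref{EqGeneralGibbsIntegrability} is assumed for $P^{(\mathsf{L}_{n_0},Q,\lambda_0)}_{\vect{\Theta}\mid\vect{Z}=\vect{z}_0}$. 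Under that condition, both sides are the same ratio of identical functions of $\vect{\theta}$ on $\supp Q$, which proves the corollary.
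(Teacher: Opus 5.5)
Your proposal is correct and follows the paper's route. The paper also substitutes the parameters into \eqref{EqApril23at18h31in2026Antibes}, uses $\alpha_k/\lambda_k = n_k/(n_0\lambda_0)$ and $\sum_{k=1}^{K} n_k \mathsf{L}_{n_k}(\vect{z}_k,\vect{\theta}) = n_0 \mathsf{L}_{n_0}(\vect{z}_0,\vect{\theta})$ to identify the exponent with $\frac{1}{\lambda_0}\mathsf{L}_{n_0}(\vect{z}_0,\vect{\theta})$, and reads off the second equality from Definition~\ref{DefGeneralGibbs}. On your well-definedness remark: finiteness of that integral is not an extra assumption, because once the clients' normalizing constants cancel it is exactly the denominator of \eqref{EqApril23at18h31in2026Antibes}, so it already holds whenever $S_K$ is well defined.
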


Corollary~\ref{CorApril23at18h55in2026Antibes} shows that
 the normalized log-linear combination of the $K$ algorithms independently obtained by the clients by training Gibbs algorithms upon their local training datasets, namely 
the measure $S_K$ in \eqref{EqApril23at18h31in2026Antibes}, is identical to a Gibbs algorithm trained on the aggregated dataset $\vect{z}_0$.
\enlargethispage{-0.10in}
More generally, the main assumption in Corollary~\ref{CorApril23at18h55in2026Antibes} is that, for all $k \inCountK{K}$,
\vspace{-3ex}
\begin{IEEEeqnarray}{rCl}
\frac{\alpha_k}{\lambda_k} = \frac{n_k}{n_0 \lambda_0}.
\end{IEEEeqnarray}
This condition can be satisfied through several choices of the pairs $(\lambda_k,\alpha_k)$. Interestingly, $\lambda_k$ is chosen locally by client $k$, whereas $\alpha_k$ is chosen by the server. In particular, the choice $\lambda_k = 1/n_k$ depends only on the size of the local dataset at client $k$. Under this choice, the condition reduces to
$\alpha_k = \frac{1}{n_0\lambda_0}$,
which is independent of $k$. Hence, the server does not need any client-specific information about the local datasets; it only requires the total number of samples $n_0$ and the centralized regularization parameter $\lambda_0$.
In other words, in a one-shot federated learning setting, this choice of parameters allows the server to achieve the same performance as a centralized Gibbs algorithm trained on the aggregation of all training datasets, by forming a normalized log-linear combination of the locally trained Gibbs algorithms.
In particular, this occurs by sharing probability distributions on the set of models only and not the actual data.
The feasibility of such a learning system is constrained by the possibility of transmitting the corresponding Gibbs measures without distortion, which is not possible in practice. This limitation is related to the practical issue mentioned in~\cite{bermudez2026decentralized}. The effect of such a distortion remains to be formally studied.

\clearpage
\IEEEtriggeratref{15}
\bibliographystyle{\Latexfilepath/Bibliography/IEEEtranlink}
\bibliography{\Latexfilepath/Bibliography/Merged_BERMUDEZ_ref.bib}

\begin{thebibliography}{10}
\providecommand{\url}[1]{#1}
\csname url@samestyle\endcsname
\providecommand{\newblock}{\relax}
\providecommand{\bibinfo}[2]{#2}
\providecommand{\BIBentrySTDinterwordspacing}{\spaceskip=0pt\relax}
\providecommand{\BIBentryALTinterwordstretchfactor}{4}
\providecommand{\BIBentryALTinterwordspacing}{\spaceskip=\fontdimen2\font plus
\BIBentryALTinterwordstretchfactor\fontdimen3\font minus
  \fontdimen4\font\relax}
\providecommand{\BIBforeignlanguage}[2]{{%
\expandafter\ifx\csname l@#1\endcsname\relax
\typeout{** WARNING: IEEEtran.bst: No hyphenation pattern has been}%
\typeout{** loaded for the language `#1'. Using the pattern for}%
\typeout{** the default language instead.}%
\else
\language=\csname l@#1\endcsname
\fi
#2}}
\providecommand{\BIBdecl}{\relax}
\BIBdecl

\bibitem{xu2017information}
A.~Xu and M.~Raginsky, ``Information-theoretic analysis of generalization
  capability of learning algorithms,'' in \emph{Proceedings of the
  International Conference on Neural Information Processing Systems (NeurIPS)},
  vol.~30, Long Beach, CA, USA, Dec. 2017, pp. 2521--2530.

\bibitem{russo2016controlling}
D.~Russo and J.~Zou, ``Controlling bias in adaptive data analysis using
  information theory,'' in \emph{Proceedings of the 19th International
  Conference on Artificial Intelligence and Statistics}, vol.~51, Cadiz, Spain,
  May 2016, pp. 1232--1240.

\bibitem{medina2022robustness}
M.~Avella~Medina, J.~L. Montiel~Olea, C.~Rush, and A.~Velez, ``On the
  robustness to misspecification of {$\alpha$}-posteriors and their variational
  approximations,'' \emph{Journal of Machine Learning Research}, vol.~23, no.
  147, pp. 1--51, 2022.

\bibitem{Bu2024Towards}
Y.~Bu, ``Towards optimal inverse temperature in the {G}ibbs algorithm,'' in
  \emph{Proceedings of the IEEE International Symposium on Information Theory
  (ISIT)}, Athens, Greece, Jul. 2024, pp. 2257--2262.

\bibitem{bu2023generalization}
Y.~Bu, H.~V. Tetali, G.~Aminian, M.~Rodrigues, and G.~W. Wornell, ``On the
  generalization error of meta learning for the {G}ibbs algorithm,'' in
  \emph{Proceedings of the IEEE International Symposium on Information Theory
  (ISIT)}, Taipei, Taiwan, Jun. 2023, pp. 2488--2493.

\bibitem{perlaza2023validation}
S.~M. Perlaza, I.~Esnaola, G.~Bisson, and H.~V. Poor, ``On the validation of
  {G}ibbs algorithms: Training datasets, test datasets and their aggregation,''
  in \emph{Proceedings of the IEEE International Symposium on Information
  Theory (ISIT)}, Taipei, Taiwan, Jun. 2023, pp. 328--333.

\bibitem{catoni2007pac}
O.~Catoni, \emph{{PAC-Bayesian} Supervised Classification: The Thermodynamics
  of Statistical Learning}, 1st~ed., ser. IMS Lecture Notes--Monograph
  Series.\hskip 1em plus 0.5em minus 0.4em\relax Beachwood, OH, USA: Institute
  of Mathematical Statistics, 2007, vol.~56.

\bibitem{ray2023asymptotics}
R.~Ray, M.~A. Medina, and C.~Rush, ``Asymptotics for power posterior mean
  estimation,'' in \emph{Proceedings of the 59th Annual Allerton Conference on
  Communication, Control, and Computing (Allerton)}, Monticello, IL, USA, Sep.
  2023, pp. 1--8.

\bibitem{perlaza2024empirical}
S.~M. Perlaza, G.~Bisson, I.~Esnaola, A.~Jean-Marie, and S.~Rini, ``Empirical
  risk minimization with relative entropy regularization,'' \emph{IEEE
  Transactions on Information Theory}, vol.~70, no.~7, pp. 5122--5161, Jul.
  2024.

\bibitem{rodriguezgalvez2024information}
B.~Rodr{\'i}guez-G{\'a}lvez, ``An information-theoretic approach to
  generalization theory,'' Ph.D. dissertation, KTH Royal Institute of
  Technology, 2024.

\bibitem{Azizian2024What}
W.~Azizian, F.~Lutzeler, J.~Malick, and P.~Mertikopoulos, ``What is the
  long-run distribution of stochastic gradient descent? {A} large deviations
  analysis,'' in \emph{Proceedings of the International Conference on Machine
  Learning (ICML)}, Vienna, Austria, Jul. 2024, pp. 2168--2229.

\bibitem{aminian2021exact}
G.~Aminian, Y.~Bu, L.~Toni, M.~Rodrigues, and G.~Wornell, ``An exact
  characterization of the generalization error for the {G}ibbs algorithm,'' in
  \emph{Proceedings of the International Conference on Neural Information
  Processing Systems (NeurIPS)}, vol.~34, Virtual Event, Dec. 2021, pp.
  8106--8118.

\bibitem{aminian2024information}
G.~Aminian, Y.~Bu, L.~Toni, M.~R.~D. Rodrigues, and G.~W. Wornell,
  ``Information-theoretic characterizations of generalization error for the
  {G}ibbs algorithm,'' \emph{IEEE Transactions on Information Theory}, vol.~70,
  no.~1, pp. 632--655, Jan. 2024.

\bibitem{perlaza2024generalization}
S.~M. Perlaza and X.~Zou, ``The method of gaps: Exact expressions for the
  generalization error of supervised learning algorithms,'' \emph{To appear in
  IEEE Transactions on Information Theory}, 2026.

\bibitem{zou2024Generalization}
X.~Zou, S.~M. Perlaza, I.~Esnaola, and E.~Altman, ``Generalization analysis of
  machine learning algorithms via the worst-case data-generating probability
  measure,'' in \emph{Proceedings of the AAAI Conference on Artificial
  Intelligence}, vol.~38, no.~15, Vancouver, Canada, Feb. 2024, pp.
  17\,271--17\,279.

\bibitem{zouJSAIT2024}
X.~Zou, S.~M. Perlaza, I.~Esnaola, E.~Altman, and H.~V. Poor, ``The worst-case
  data-generating probability measure in statistical learning,'' \emph{IEEE
  Journal on Selected Areas in Information Theory}, vol.~5, pp. 175--189, Apr.
  2024.

\bibitem{bercher2012simple}
J.-F. Bercher, ``A simple probabilistic construction yielding generalized
  entropies and divergences, escort distributions and q-{G}aussians,''
  \emph{Physica A: Statistical Mechanics and its Applications}, vol. 391,
  no.~19, pp. 4460--4469, Oct. 2012.

\bibitem{wilson1983renormalization}
K.~G. Wilson, ``The renormalization group and critical phenomena,''
  \emph{Reviews of Modern Physics}, vol.~55, no.~3, pp. 583--600, Jul. 1983.

\bibitem{asadi2025hierarchical}
A.~R. Asadi, ``Hierarchical maximum entropy via the renormalization group,''
  \emph{arXiv preprint arXiv:2509.01424}, Sep. 2025.

\bibitem{chhabra1989direct}
A.~Chhabra and R.~V. Jensen, ``Direct determination of the f($\alpha$)
  singularity spectrum,'' \emph{Physical Review Letters}, vol.~62, no.~12, pp.
  1327--1330, Mar. 1989.

\bibitem{beck1993thermodynamics}
C.~Beck and F.~Schl{\"o}gl, \emph{Thermodynamics of Chaotic Systems: An
  Introduction}, ser. Cambridge Nonlinear Science Series.\hskip 1em plus 0.5em
  minus 0.4em\relax Cambridge, UK: Cambridge University Press, 1993, vol.~4.

\bibitem{abe2003geometry}
S.~Abe, ``Geometry of escort distributions,'' \emph{Physical Review E},
  vol.~68, no.~3, p. 031101, Sep. 2003.

\bibitem{ohara2010dually}
A.~Ohara, H.~Matsuzoe, and S.~Amari, ``A dually flat structure on the space of
  escort distributions,'' \emph{Journal of Physics: Conference Series}, vol.
  201, no.~1, p. 012012, Dec. 2010.

\bibitem{bercher2009source}
J.-F. Bercher, ``Source coding with escort distributions and {R}{\'e}nyi
  entropy bounds,'' \emph{Physics Letters A}, vol. 373, no.~36, pp. 3235--3238,
  Aug. 2009.

\bibitem{tsallis2009introduction}
C.~Tsallis, \emph{Introduction to Nonextensive Statistical Mechanics:
  Approaching a Complex World}, 1st~ed.\hskip 1em plus 0.5em minus 0.4em\relax
  New York, NY, USA: Springer, 2009.

\bibitem{abe2005necessity}
S.~Abe and G.~B. Bagci, ``Necessity of q-expectation value in nonextensive
  statistical mechanics,'' \emph{Physical Review E}, vol.~71, no.~1, p. 016139,
  Jan. 2005.

\bibitem{bermudez2026machine}
Y.~Bermudez, S.~M. Perlaza, and I.~Esnaola, ``Machine unlearning for {G}ibbs
  supervised learning algorithms,'' in \emph{Proceedings of the IEEE
  International Symposium on Information Theory (ISIT)}, Guangzhou, China, Jun.
  2026.

\bibitem{bermudez2026decentralized}
------, ``Decentralized machine learning with centralized performance
  guarantees via {G}ibbs algorithms,'' in \emph{Proceedings of the IEEE
  International Symposium on Information Theory (ISIT)}, Guangzhou, China, Jun.
  2026.

\bibitem{lalitha2018social}
A.~Lalitha, T.~Javidi, and A.~D. Sarwate, ``Social learning and distributed
  hypothesis testing,'' \emph{IEEE Transactions on Information Theory},
  vol.~64, no.~9, pp. 6161--6179, Sep. 2018.

\bibitem{mcmahan2017communication}
H.~B. McMahan, E.~Moore, D.~Ramage, S.~Hampson, and B.~{Ag{\"u}era y Arcas},
  ``Communication-efficient learning of deep networks from decentralized
  data,'' in \emph{Proceedings of the 20th International Conference on
  Artificial Intelligence and Statistics (AISTATS)}, ser. Proceedings of
  Machine Learning Research, vol.~54, Apr. 2017, pp. 1273--1282.

\bibitem{InriaRR9613}
Y.~Bermudez, S.~M. Perlaza, and I.~Esnaola, ``Equivalence between nested
  {G}ibbs measures and log-linear combinations of {G}ibbs measures,'' INRIA,
  Centre Inria d'Université Côte d'Azur, Sophia Antipolis, France, Tech. Rep.
  RR-9613, May 2026.

\bibitem{PerlazaEntropy2025}
S.~M. Perlaza and G.~Bisson, ``Variations on the expectation due to changes in
  the probability measure,'' \emph{Entropy}, vol.~27, no. 8:865, pp. 1--20,
  Aug. 2025.

\end{thebibliography}
\end{document}